\title{On the Morse-Hedlund complexity gap}
\author{Julien Cassaigne \and Fran\c{c}ois Nicolas}
\newcommand{\N}{\mathbb{N}} 
 \newcommand{\Nast}{\N \setminus \{ 0 \}} 
\newcommand{\lgr}[1]{\left| #1 \right|}
\newcommand{\ze}{\mathtt{a}}
\newcommand{\on}{\mathtt{b}}
\newcommand{\mv}{\varepsilon}
\newcommand{\zeon}{\left\{ \ze, \on \right\}}
\newtheorem{lemma}{Lemma}
\newtheorem{theorem}{Theorem}
\newtheorem{definition}{Definition}
\newtheorem{claim}{Claim}
\theoremstyle{definition}
\newtheorem{exercise}{Exercise}
\newtheorem{example}{Example}
\newcommand{\fiatc}{\textup{FIATC}}
\begin{document}

\maketitle

\sloppy

\begin{abstract}
In 1938, Morse and Hedlund proved that the subword complexity function of any bi-infinite word is either bounded or at least linearly growing.
In 1982, Ehrenfeucht and Rozenberg proved that this gap property holds for the subword complexity function of any language.
% Their result was then sharpened in 2005 by Balogh and Bollob\'as.
The aim of the present paper is to present a self-contained, compact proof of the latter  result.
\end{abstract}

\section{Notation and definitions}

The set of natural integers is denoted $\N$.
Throughout the paper, 
$A$ denotes a finite set of symbols, \emph{i.e.}, an \emph{alphabet},
and 
$\alpha$ denotes the cardinality of $A$.

A \emph{word} over $A$ is a finite string of elements of $A$. 
The set of all words over $A$ is denoted $A^\star$.
For every $w \in A^\star$,  $\lgr{w}$ denotes the \emph{length} of $w$. 
For each $n \in \N$,  $A^n$ denotes the set of all $n$-length words over $A$.
The \emph{empty word}, denoted $\mv$, is the unique word of length zero.
Word concatenation is denoted multiplicatively.
Given $x$, $y \in A^\star$,
we say that $x$ is a \emph{prefix} of $y$ if there exists $w \in A^\star$ such that $y = xw$,
we say that $x$ is a \emph{suffix} of $y$ if there exists $w \in A^\star$ such that $y = wx$,  and
we say that $x$ is a \emph{factor} of $y$ if there exist $w$, $w' \in A^\star$ such that $y =  w x w'$.

A \emph{language} over $A$ is a subset of $A^\star$.
Let $L$ be a language over $A$.
Given $x \in A^\star$, 
we say that $x$ is a factor of $L$ if $x$ is a factor of some word in $L$.
For each $n \in \N$, $F_n(L)$ denotes the set of all $n$-length factors of $L$.
The \emph{complexity function} of $L$ is the function mapping each $n \in \N$ to the cardinality of $F_n(L)$.
Let $p$ denote the complexity function of $L$.
Note that $p(0) = 1$ unless $L = \emptyset$ because $\mv$ is a factor of every word.

\begin{example}
Let $L = A^\star$.
For each $n \in \N$,
it is clear that $F_n(L) = A^n$, 
so $p(n) = \alpha^n$. 
\end{example}

\begin{example} \label{ex:aabb}
Let $L = \left\{ \ze^i \on^j : i, j \in \N  \right\}$.
For each $n \in \N$, 
it is clear that $F_n(L) = \left\{ \ze^{n - k} \on^k : k = 0, 1, 2, \dotsc, n \right\}$, 
so $p(n) = n + 1$.
\end{example}

\begin{example}
Let $L = \left\{ \on^i \ze \on^j \ze \on^i : i, j \in \N  \right\}$.
For each $n \in \N$, 
$F_n(L)$ is the set of those $n$-length words over $\{ \ze, \on \}$ in which $\ze$ occurs at most twice, 
so $p(n) = \frac{1}{2}n(n +  1) + 1$.  
\end{example}

\begin{example} \label{ex:bounded-non-constant}
Let 
$L = \left\{ \on \ze^{2k} \on : k \in \N  \right\}$ 
and 
$X_n = \left\{ \ze^n, \on \ze^{n - 1}, \ze^{n - 1} \on \right\}$
for each $n \in \Nast$.
For each $k \in \Nast$, 
it is clear that 
$F_{2k}(L) = X_{2k} \cup \left\{ \on \ze^{2k - 2} \on  \right\}$ 
and 
$F_{2k + 1}(L) = X_{2k + 1}$,
so $p(2k) = 4$ and $p(2k + 1) = 3$.
\end{example}

\begin{exercise} 
Let $L = \left\{ aa : a \in A \right\}^\star$: $L$ is the closure of $\left\{ aa : a \in A \right\} \cup \{ \mv \}$ under concatenation.
Prove that 
$p(2k + 1) = 2 \alpha^{k + 1} - \alpha$ 
and 
$p(2k) = \alpha^{k + 1} + \alpha^k  - \alpha$ for every $k \in \N$.
\end{exercise}

% Let $D = \left\{ aa : a \in A \right\}$, 
% On the one hand, we have 
% $$F_{2k + 1}(L) = A  D^{k} \cup D^{k} A$$
% and 
% $$A D^k \cap D^k A = \left\{ a^{2k + 1} : a \in A  \right\}\,, $$
% so 
% $$p(2k + 1) = 2 \alpha^{k + 1} - \alpha \, .$$
% On the other hand, we have 
% $$
% F_{2k}(L) = A  D^{k - 1} A \cup D^k$$ 
% and 
% $$ A D^{k - 1} A \cap D^k  = \left\{ a^{2k} : a \in A  \right\}\,,$$
% so 
% $$p(2k) =  \alpha^{k + 1} + \alpha^k  - \alpha \,.$$

\begin{exercise} \label{exo:exp} 
Prove that if $L$ is a proper subset of $A^\star$ then there exist real numbers $\lambda$, $\beta \ge 1$ such that 
$\beta < \alpha$ 
and 
$p(n) \le \lambda \beta^n$ for every $n \in \N$ (hint: $p(n + n') \le p(n) p(n')$ for all $n$, $n' \in \N$). 
\end{exercise}

\section{The Morse-Hedlund complexity gap} 

The aim of this paper is to present a self-contained, compact proof of:

\begin{theorem}[Ehrenfeucht and Rozenberg, 1982 \cite{EhrenfeuchtR82}] \label{th:ER82}
Let  $p$ be the complexity function of some language. 
Either $p(n)$ is greater than $n$ for every $n \in \N$, or $p$ is bounded.
\end{theorem}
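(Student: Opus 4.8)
The plan is to reduce the statement to a single inequality about the first difference of $\cpx$ and then locate the genuine difficulty. Write $\spec(n) = \cpx(n+1) - \cpx(n)$. First I would dispose of the degenerate case: if the underlying language has factors of only bounded length, then $\cpx$ is eventually $0$, hence bounded, so the second alternative holds; thus I may assume $F_n(L) \neq \emptyset$, i.e.\ $\cpx(n) \geq 1$, for every $n$. The key reduction is the claim ($\star$): for every $n \geq 1$, if $\cpx(n) \leq n$ then $\cpx(n+1) \leq \cpx(n)$. Granting ($\star$), the theorem follows by a short induction. Suppose the first alternative fails, so $\cpx(n_0) \leq n_0$ for some $n_0 \geq 1$. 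Then $\cpx(n_0+1) \leq \cpx(n_0) \leq n_0 \leq n_0+1$ by ($\star$), so the hypothesis $\cpx(n) \leq n$ propagates, and an immediate induction gives $\cpx(n+1) \leq \cpx(n)$ for all $n \geq n_0$. Hence $\cpx$ is non-increasing beyond $n_0$, so bounded. Using $\cpx(0)=1$, this is exactly the dichotomy: either every increment is positive, forcing $\cpx(n) \geq n+1$, or some increment is non-positive and $\cpx$ stays bounded thereafter.

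To attack ($\star$) I would set up the standard extension bookkeeping. For a factor $x$ let $r(x)$, resp.\ $\ell(x)$, be the number of letters $b \in A$ with $xb$ a factor, resp.\ $c \in A$ with $cx$ a factor, and for $u \in F_{n-1}(L)$ put $m^+(u) = \#\{(c,b) \in A \times A : cub \text{ is a factor of } L\}$. Classifying each length-$n$ factor by its length-$(n-1)$ prefix gives $\cpx(n) = \sum_{u} r(u)$, classifying by suffix gives $\cpx(n) = \sum_{u} \ell(u)$, and classifying each length-$(n+1)$ factor by its central length-$(n-1)$ factor gives $\cpx(n+1) = \sum_{u} m^+(u)$, all sums over $u \in F_{n-1}(L)$. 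Subtracting, with $\# F_{n-1}(L) = \cpx(n-1)$, yields the exact identity
$$ \spec(n) - \spec(n-1) = \sum_{u \in F_{n-1}(L)} \bigl( m^+(u) - \ell(u) - r(u) + 1 \bigr). $$
Since $m^+(u) \leq \ell(u)\,r(u)$, each summand is at most $(\ell(u)-1)(r(u)-1)$, so $\spec$ can increase only through bispecial factors ($\ell(u) \geq 2$ and $r(u) \geq 2$), whereas the non-positive contributions arise from factors whose one-sided extensions fail to combine, in particular from dead-end factors $x$ with $r(x)=0$. This reduces the behaviour of $\cpx$ entirely to the bispecial factors.

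The main obstacle is that this identity is purely local to one length and does not by itself forbid $\spec$ from climbing back above $0$: a ``strong'' bispecial factor contributes $+1$, and nothing local rules one out when $\spec(n-1) \leq 0$. Overcoming this is the real content of the theorem, and I would bring in the hypothesis $\cpx(n) \leq n$ globally rather than through the identity alone. The idea is that $\cpx(n) \leq n$ forces strong repetition: in any factor of length $2n$ the $n+1$ windows of length $n$ cannot all be distinct, so some length-$n$ factor recurs at a distance $d \leq n$, pinning a period $d$ on a block of length at least $n+d$. I would then argue that these locally forced periods must be globally consistent --- combining overlapping periodic blocks via a Fine--Wilf-type argument --- so that all sufficiently long factors lie on a single periodic orbit of bounded period, which caps the number of right-extensions and gives $\cpx(n+1) \leq \cpx(n)$, establishing ($\star$). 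The step I expect to be most delicate is precisely this passage from many local periods to one global period, while keeping careful track of the few dead-end factors, which are exactly what permit $\spec$ to be non-positive in the first place.
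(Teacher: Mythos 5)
Your key reduction ($\star$) --- that $\cpx(n) \le n$ forces $\cpx(n+1) \le \cpx(n)$ --- is false, and the induction built on it collapses. A counterexample is already in the paper: take $L = \left\{ \on \ze^{2k} \on : k \in \N \right\}$. Then $F_3(L) = \{ \ze\ze\ze, \on\ze\ze, \ze\ze\on \}$, so $\cpx(3) = 3 \le 3$, yet $F_4(L) = \{ \ze\ze\ze\ze, \on\ze\ze\ze, \ze\ze\ze\on, \on\ze\ze\on \}$, so $\cpx(4) = 4 > \cpx(3)$. In fact $\cpx(n) = 3$ for all odd $n \ge 3$ and $\cpx(n) = 4$ for all even $n \ge 2$, so the complexity of a general language can oscillate forever below the diagonal: no pointwise statement of the form ``once $\cpx(n)\le n$, the function is non-increasing from $n$ on'' can hold. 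This is not a repairable technicality of bookkeeping: Theorem~\ref{th:BB2005} (Balogh and Bollob\'as) exists precisely because, after a value $\cpx(m) \le m$ is attained, $\cpx$ can still climb --- up to roughly $(\cpx(m)+1)^2/4$, and only after a delay of about $\cpx(m)+m$ --- and its second part shows that such climbs genuinely occur. Any correct proof must therefore conclude boundedness \emph{eventually}, not monotonicity \emph{immediately}, which your inductive scheme cannot do.

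The deeper obstruction, which also undermines your second half, is that a general language need not be extendable: factors can be dead ends. Your Fine--Wilf plan assumes that a length-$n$ factor sits inside a factor of length $2n$ (``the $n+1$ windows of length $n$''), but in a language like the one above, or any language of bounded words, there may be nothing to slide the window along; local periods then have nowhere to propagate. Your second-difference identity (counting factors by prefix, suffix, and central word) is correct, and the bispecial-factor heuristic it supports is sound for extendable languages --- it is essentially how one proves Theorem~\ref{th:MH1938} --- but it is local and cannot bridge the gap. The paper's proof confronts non-extendability head on: it introduces the languages $L_k$ of words extendable by $k$ letters inside $L$ and the limit language $L'$ of words that are factors of every $L_k$, proves that $L'$ \emph{is} extendable (Claim~\ref{claim:cpxprime}), applies the special-factor dichotomy (Lemmas~\ref{lem:non-inc} and~\ref{lem:ext-inc}) to $L'$, and then transfers boundedness back to $L$ via $L \subseteq L_k A^k \cup \{ w \in L : \lgr{w} < k \}$ and Lemma~\ref{lem:prod-XY}. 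Some analogue of that limiting construction is exactly what is missing from your proposal.
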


For instance, it follows from Theorem~\ref{th:ER82} that no complexity function grows like $\sqrt{n}$.
Example~\ref{ex:aabb} shows that the lower bound is tight.
Exercise~\ref{exo:exp} shows that there is also a ``complexity gap'' at the opposite extremity of the spectrum.
In addition to proving Theorem~\ref{th:ER82}, 
Ehrenfeucht and Rozenberg described the class of those languages whose complexity functions are bounded:

\begin{theorem}[Ehrenfeucht and Rozenberg, 1982 \cite{EhrenfeuchtR82}] \label{th:ER82-v}
Let $L$ be a language over $A$.
The complexity function of $L$ is bounded if, and only if, 
 there exists a finite language $X$ over $A$ such that 
$L \subseteq   \left\{ x y^n z :  (x, y, z, n) \in X \times X \times X \times \N \right\}$.
\end{theorem}

Before presenting the proof of Theorem~\ref{th:ER82}, 
let us state the various related results that can be found in the literature.
In order to do so, 
let us introduce a little more material.
A \emph{(right-)infinite word} over $A$  is a function from $\N$ to $A$.
A \emph{bi-infinite} word over $A$ is a function from the set of rational integers to $A$.
Let  $u$ be an infinite or bi-infinite word over $A$.
Given $x \in A^\star$,
we say that $x$ is a factor of $u$ if there exists $i$ in the domain of $u$ such that 
$x = u(i) u(i + 1) u(i + 2) \dotsb u(i + \lgr{x} - 1)$.
The set of all factors of $u$ is called the language of $u$.
The complexity function of $u$ is defined as the complexity function of its language.

\begin{example} \label{ex:u-aabb}
Let $u$ be the bi-infinite word over $\{ \ze, \on \}$ given by: 
$u(i) = \on$ for every $i \in \N$ 
and 
$u(-i) = \ze$ for every $i \in \Nast$.
Let $L$ be as in Example~\ref{ex:aabb}.
The language of $u$ equals $L$, so the complexity function of $u$ maps $n$ to $n + 1$ for every $n \in \N$.
\end{example}

We say that a function $p\colon\N \to \N$ is \emph{\fiatc} (first increasing and then constant) if there exists $m \in \N$ such that 
$p(0) < p(1) < p(2) < \dotsb < p(m)$ and
$p(m + n) = p(m)$ for every $n \in \N$. 

\begin{theorem}[Morse and Hedlund, 1938 \cite{MorseHedlund38,CovenH1973}] \label{th:MH1938}
Let $u$ be a bi-infinite word and let $p$ denote the complexity function of $u$.
\begin{itemize}
\item 
If $u$ is not periodic then $p$ is increasing.
\item 
If $u$ is periodic then $p$ is \fiatc{} and $\sup_{n \in \N} p(n)$ is the least period of $u$.
\end{itemize}
\end{theorem}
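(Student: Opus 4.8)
The plan is to analyze the first difference $\cpx(n+1) - \cpx(n)$ through one-letter extensions of factors. Since $u$ is bi-infinite, every factor of length $n$ occurs inside $u$ and can be extended both to the right and to the left; consequently the two maps $F_{n+1}(u) \to F_n(u)$ deleting, respectively, the last and the first letter are both surjective. As they go between finite sets, each is a bijection exactly when $\cpx(n+1) = \cpx(n)$, in which case every length-$n$ factor admits a \emph{unique} right extension and a \emph{unique} left extension. In particular $\cpx$ is non-decreasing, and the whole theorem reduces to understanding the equality $\cpx(n+1) = \cpx(n)$.

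First I would prove the key lemma: if $\cpx(n+1) = \cpx(n)$ then $u$ is periodic. Writing $w_k = u(k)u(k+1)\dotsb u(k+n-1)$ for the length-$n$ window at position $k$, the sequence $(w_k)_{k \in \Z}$ takes only finitely many values, so $w_a = w_b$ for some $a < b$; put $T = b - a$. Uniqueness of the right extension forces $u(a+n) = u(b+n)$, hence $w_{a+1} = w_{b+1}$, and by induction $w_k = w_{k+T}$ for all $k \geq a$; uniqueness of the left extension propagates the same identity for $k \leq a$. Reading off first letters gives $u(k) = u(k+T)$ for every $k \in \Z$, so $u$ is periodic. Contraposing, if $u$ is not periodic then $\cpx(n+1) > \cpx(n)$ for all $n$, which is the first bullet.

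For the periodic case I would first record that $\cpx$ is \fiatc. If $\cpx(n+1) = \cpx(n)$, then every length-$n$ factor has a unique right extension; reading the length-$n$ suffix of any length-$(n+1)$ factor, whose continuation is thereby determined, shows that every length-$(n+1)$ factor also has a unique right extension, whence $\cpx(n+2) = \cpx(n+1)$. Thus equality, once it occurs, persists. Since a periodic word has at most one period's worth of distinct factors of each length, $\cpx$ is bounded and must eventually stabilize; letting $m$ be the least index with $\cpx(m) = \cpx(m+1)$ gives $\cpx(0) < \cpx(1) < \dotsb < \cpx(m)$ followed by a constant value, so $\cpx$ is \fiatc.

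It remains to identify the stabilized value $P := \sup_n \cpx(n)$ with the least period $T_0$ of $u$. Since $u$ has period $T_0$ the windows satisfy $w_k = w_{k+T_0}$, giving at most $T_0$ distinct factors of each length, so $P \leq T_0$. For the reverse inequality I would fix $n$ past stabilization and consider the map $\sigma$ sending each factor in $F_n(u)$ to the length-$n$ suffix of its unique right extension; by the equal cardinality of consecutive factor sets, $\sigma$ is a permutation of the $P$-element set $F_n(u)$, and $w_{k+1} = \sigma(w_k)$ for all $k$. Because every length-$n$ factor of $u$ equals some $w_k$, the bi-infinite orbit $(w_k)$ exhausts $F_n(u)$, so $\sigma$ is a single $P$-cycle and $\sigma^{P} = \mathrm{id}$; hence $w_{k+P} = w_k$, and taking first letters gives $u(k) = u(k+P)$ for all $k$, so $P$ is a period of $u$ and $T_0 \leq P$. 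Combining the inequalities yields $\sup_n \cpx(n) = P = T_0$. I expect this last step to be the main obstacle: one must argue carefully that $\sigma$ is a genuine permutation rather than a partial map, that the single bi-infinite orbit visits \emph{every} factor so that the cycle length is exactly $P$ and not a proper divisor, and that the period $P$ thus produced meets the lower bound $T_0$ coming from minimality. By contrast, the monotonicity and equality-propagation steps are routine once the extension-counting bijections are in place.
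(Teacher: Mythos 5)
Your proof is correct. Note, however, that the paper itself never proves Theorem~\ref{th:MH1938}: it is stated as background with a citation, the relevant machinery being developed only for general languages (Lemmas~\ref{lem:non-inc} and~\ref{lem:ext-inc}), with the word-specific consequences relegated to Exercises~\ref{exo:eventually-w} and~\ref{exo:demo-MH}. Measured against that machinery, your argument is largely the same ideas in different clothing: a length-$n$ factor with a unique right extension is exactly a length-$n$ factor that is not special, so your equivalence ``$\cpx(n+1)=\cpx(n)$ iff every length-$n$ factor extends uniquely on the right'' is the paper's observation that $\rho$ is injective on $F_{n+1}(L)$ iff $L$ has no special factor of length $n$; the surjectivity of your two deletion maps is the (two-sided) extendability of the language of a bi-infinite word (Example~\ref{ex:inf-ext}); your persistence-of-equality step reproves Lemma~\ref{lem:non-inc} in this special case; and your key lemma (equality of consecutive complexities forces periodicity) is the bi-infinite analogue of Exercise~\ref{exo:eventually-w}. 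What your proof genuinely adds, and what the paper's toolkit does not touch at all, is the identification of $\sup_{n \in \N} \cpx(n)$ with the \emph{least} period: the observation that, past stabilization, the shift on length-$n$ windows induces a permutation of $F_n(u)$ which is a single $\cpx(n)$-cycle (because every element of $F_n(u)$ occurs as a window $w_k$, and all the $w_k$ lie in one cycle) is exactly what is needed to exclude a cycle length that is a proper divisor of $\cpx(n)$, and you carry it out correctly. One pedantic remark: ``reading off first letters'' in your key lemma requires $n \ge 1$; for $n = 0$ the hypothesis $\cpx(1) = \cpx(0) = 1$ means $u$ is constant, so the conclusion holds trivially, but the case deserves a word.
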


On the one hand,  \fiatc{} functions are clearly bounded.
On the other hand, observe that any increasing function $p\colon \N \to \N$ satisfies $p(n) \ge n + p(0)$ for every $n \in \N$. 
Hence, Theorem~\ref{th:MH1938} implies Theorem~\ref{th:ER82} in the case where $p$ is the complexity function of a bi-infinite word.
Example~\ref{ex:bounded-non-constant} shows that bounded complexity functions are not necessarily \fiatc;
unbounded complexity functions are not necessarily monotonically increasing:

\begin{example} \label{ex:unbounded-non-inc} 
Let $L = \zeon^\star \cup \left\{ \mathtt{c}, \mathtt{d}, \mathtt{e} \right\}$. 
It is clear that $p(n) = 2^n$ for every $n \in \N \setminus \{ 1 \}$ 
and 
that $p(1) = 5$. 
Remark that $p$ is unbounded and $p(2) = 4 < 5 = p(1)$.
\end{example}

\begin{example}
Let 
$L = \left\{ \ze^k \on^k : k \in \N  \right\} 
\cup  \left\{ \on \ze^{2k} \on : k \in \N  \right\} 
\cup  \left\{ \ze \on^{2k} \ze : k \in \N  \right\}$. 
It is clear that $p(2k + 1) = 2k + 2$ for every $k \in \N$
and that  $p(2k) = 2k + 3$ for every $k \in \N \setminus \{ 0, 1 \}$.
Remark that $p$ is unbounded and $p(n) > p(n + 1)$ for infinitely many $n \in \N$.
\end{example}

\begin{exercise}
 Let $L$ be the set of all non-empty words $w$ over $\zeon$ such that for every integer $k$ with 
$1 \le k < 2^{\left\lfloor \log_4 \lgr{w} \right\rfloor}$,
$\on \ze^k \on$ is not a factor of $w$.
\begin{enumerate}
\item 
Prove that  $p$ is not polynomially bounded (hint: $p(4^k) \ge 2^{2^k}$ for every $k \in \N$).
\item 
Prove that $p(n) > p(n + 1)$ for infinitely many $n \in \N$ (hint: $p(4^k - 1) > p(4^k)$ for all but finitely many $k \in \N$).
\end{enumerate}
\end{exercise}

The most famous variant of Theorems~\ref{th:ER82}/\ref{th:ER82-v} and~\ref{th:MH1938} is: 

\begin{theorem}[\cite{CovenH1973,Bush1955,Lothaire2,Pytheas,CantBook}] \label{th:MH1973}
Let $u$ be an infinite word and let $p$ denote the complexity function of $u$.
\begin{itemize}
 \item 
If $u$ is not eventually periodic then $p$ is increasing.
 \item
If $u$ is eventually periodic then $p$ is \fiatc{} and the period of $u$ is not greater than $\sup_{n \in \N} p(n)$.
\end{itemize} 
\end{theorem}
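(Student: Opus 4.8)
The plan is to study the first-difference function $\spec(n) = \cpx(n+1) - \cpx(n)$ and to read off both bullets from the single threshold at which $\spec$ vanishes. First I would record that $\cpx$ is non-decreasing: since $u$ is infinite, every factor $w \in F_n(u)$ occurs at some position $i$ and is followed by the letter $u(i + n)$, so the map $F_{n+1}(u) \to F_n(u)$ that deletes the last letter is onto; hence $\cpx(n) \le \cpx(n+1)$ and $\spec(n) \ge 0$. Call a factor $w \in F_n(u)$ \emph{right-special} if $wa$ and $wb$ are both factors of $u$ for two distinct letters $a, b \in A$. Counting each factor of length $n$ together with its right extensions gives $\cpx(n+1) = \sum_{w \in F_n(u)} e(w)$, where $e(w) \ge 1$ is the number of letters $a$ with $wa \in F_{n+1}(u)$; consequently $\spec(n) = \sum_{w} (e(w) - 1)$, so $\spec(n) = 0$ holds exactly when $u$ has no right-special factor of length $n$.

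Two structural facts then drive the argument. \emph{Monotonicity of special factors:} if $w \in F_{n+1}(u)$ is right-special, then its length-$n$ suffix is a right-special factor of length $n$; equivalently, $\spec(n) = 0$ forces $\spec(n+1) = 0$, so the set $\{ n : \spec(n) > 0 \}$ is an initial segment of $\N$. \emph{Key lemma:} if $u$ has no right-special factor of some length $n_0$, then $u$ is eventually periodic with eventual period at most $\cpx(n_0)$. To prove the lemma, note that each $w \in F_{n_0}(u)$ now has a single right extension $w a_w$, and let $\phi(w)$ be the length-$n_0$ suffix of $w a_w$; this defines a map $\phi$ from the finite set $F_{n_0}(u)$ to itself. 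Writing $w_i = u(i) u(i+1) \dotsb u(i + n_0 - 1)$ for the factor at position $i$, uniqueness of the right extension forces $w_{i+1} = \phi(w_i)$, so $(w_i)_{i \in \N}$ is an orbit of $\phi$ on a finite set and is therefore eventually periodic; its terminal cycle has length at most $\lgr{F_{n_0}(u)} = \cpx(n_0)$. Since $u(i)$ is the first letter of $w_i$, the word $u$ is eventually periodic with eventual period at most $\cpx(n_0)$.

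With these facts both bullets follow. Suppose first that $u$ is not eventually periodic. By the contrapositive of the key lemma, $\spec(n) \ne 0$, hence $\spec(n) \ge 1$, for every $n$; thus $\cpx(n) < \cpx(n+1)$ for all $n$ and $\cpx$ is increasing. Suppose instead that $u$ is eventually periodic, say $u(i) = u(i + q)$ for all $i \ge t$. A factor of length $n$ starting at a position $i \ge t$ equals the one starting at $i - q$ whenever $i - q \ge t$, so every factor of length $n$ already starts at a position in $\{ 0, 1, \dotsc, t + q - 1 \}$; hence $\cpx(n) \le t + q$ and $\cpx$ is bounded. Being non-decreasing and bounded, $\cpx$ is eventually constant, so $\spec(m) = 0$ for some $m$; taking $m$ least with this property, monotonicity of special factors gives $\cpx(0) < \cpx(1) < \dotsb < \cpx(m)$ and $\cpx(m + n) = \cpx(m)$ for all $n$, that is, $\cpx$ is \fiatc. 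Finally, applying the key lemma at the length $m$ gives eventual period at most $\cpx(m) = \sup_{n \in \N} \cpx(n)$, as required.

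The main obstacle is the key lemma, and within it the passage from ``no right-special factor of length $n_0$'' to genuine eventual periodicity with the \emph{quantitative} bound on the period: one must argue that the successor map $\phi$ is well defined (which uses uniqueness of right extensions), that the sequence of length-$n_0$ windows is exactly the $\phi$-orbit of $w_0$, and that the terminal cycle length controls the eventual period of $u$. A secondary point needing care is the exact meaning of ``the period of $u$'' in the statement: the argument bounds the ultimate (eventual) period, so one should check this matches the intended reading, and that the strict increase $\cpx(0) < \dotsb < \cpx(m)$ claimed for \fiatc{} really follows from $\{ n : \spec(n) > 0 \}$ being an initial segment.
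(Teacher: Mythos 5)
Your proof is correct. One thing to be aware of: the paper does not actually prove Theorem~\ref{th:MH1973} --- it is quoted from Coven and Hedlund with a citation only --- so there is no official proof to compare against. However, the toolkit the paper builds for Theorem~\ref{th:ER82} consists of exactly the ingredients you use, and your argument is the one that toolkit anticipates. Your surjectivity of the last-letter-deletion map is Example~\ref{ex:inf-ext} (the language of an infinite word is extendable) combined with the map $\rho$ of Section~\ref{sec:proof}; your ``monotonicity of special factors'' is the observation, made inside the proof of Lemma~\ref{lem:ext-inc}, that every suffix of a special factor is a special factor; your conclusion that $\cpx$ is either increasing or \fiatc{} is precisely Exercise~\ref{exo:demo-MH}; and your key lemma is Exercise~\ref{exo:eventually-w}, which the paper leaves unsolved. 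You go beyond what the paper's fragments would give in two respects, both genuinely needed. First, you prove a quantitative version of Exercise~\ref{exo:eventually-w}: the successor map $\phi$ on length-$n_0$ windows shows the eventual period is at most $\cpx(n_0)$, which is what yields the bound $\sup_{n \in \N} \cpx(n)$ in the second bullet (the qualitative exercise alone would not). Second, you supply the converse direction --- eventually periodic implies $\cpx$ bounded, hence eventually constant --- without which the \fiatc{} conclusion would not follow. The only caveat, which you flag yourself, is that ``the period of $u$'' must be read as the eventual period of $u$; with that reading your orbit argument bounds exactly the right quantity, and the proof is complete.
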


A \emph{Sturmian word} is nowadays usually defined as an infinite word whose complexity function maps $n$ to $n + 1$ for every $n \in \N$ \cite{Lothaire2,Pytheas}.
By Theorem~\ref{th:MH1973}, Sturmian words are those non-eventually-periodic infinite words with minimum complexity.
There is no trivial example of Sturmian word.

\begin{exercise}
 Let $u$ be a Sturmian word and let $u'$ be the infinite word defined by: 
$u'(i) = u(i + 1)$ for every $i \in \N$.
Prove that the language of $u$ equals the language of $u'$.
\end{exercise}

To conclude the section, let us state the latest improvement of Theorem~\ref{th:ER82}.

\newcommand{\qq}[1]{\left\lceil \dfrac{#1 + 1}{2} \right\rceil \left\lfloor \dfrac{#1 + 1}{2} \right\rfloor}

\begin{theorem}[Balogh and Bollob\'as, 2005 \cite{BaloghB05}] \label{th:BB2005}
For each real number $x$, let 
$$
\phi(x) = \qq{x} \, .
$$
\begin{itemize}
\item 
Let $p$ be the complexity function of some language and let $m \in \N$.
If $p(m) \le m$ then
$p(n + p(m) + m) \le \phi(p(m))$ 
for every $n \in \N$.
\item  
For each $k \in \N$, there exists a function $p_k\colon \N \to \N$ such that
$p_k$ is the complexity function of some language  over $\{ \ze, \on \}$
and 
both sets $\left\{ n \in \N : p_k(n) = k \right\}$ and  $\left\{ n \in \N : p_k(n) = \phi(k) \right\}$ are infinite.
\end{itemize}
\end{theorem}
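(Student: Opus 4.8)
The plan is to establish the two bullets separately, writing $k = \cpx(m)$ so that the hypothesis of the first bullet reads simply $k \le m$. For the upper bound I would start from Theorem~\ref{th:ER82}: because $\cpx(m) = k \le m$, the function $\cpx$ fails the strict inequality $\cpx(n) > n$ at $n = m$, hence it must be bounded, and Theorem~\ref{th:ER82-v} then confines the factor set of the language (which has the same complexity function) to a finite union $\bigcup_{(x,y,z)\in T} \{ x y^n z : n \in \N \}$. This places us in the ultimately periodic regime, and the whole problem reduces to counting, as a function of the single integer $k$, how many distinct long factors such a union can carry.

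The mechanism controlling this count — and the reason for the curious shift by $\cpx(m) + m$ in the statement — is a windowing argument. A factor of length $\ell$ possesses $\ell - m + 1$ windows of length $m$; as soon as $\ell \ge k + m$ this number exceeds the number $k$ of distinct length-$m$ factors, so two of the windows coincide, the factor repeats a length-$m$ block, and it is forced into one of finitely many eventually periodic patterns. The threshold $\ell = k + m$ is exactly the length obtained by taking $n = 0$ in $\cpx(n + \cpx(m) + m)$, which is precisely why the bound is asserted only from that length onward. For $\ell \ge k + m$ every factor is read inside one of the periodic cores $y^{\infty}$, so $\cpx(\ell)$ equals the number of distinct length-$\ell$ windows these cores produce, and I would bound this by organising the cores according to their minimal periods.

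For the second bullet I would generalise the paper's own example $\{ \on \ze^{2k} \on : k \in \N \}$, which already realises the pair $(3, \phi(3)) = (3,4)$: the block-spanning factor $\on \ze^{n-2} \on$ exists precisely at the even lengths, producing the extra unit over the generic value $3$. For arbitrary $k$ I would take a binary language built from runs $\ze^{\,r}$ separated by single occurrences of $\on$, the admissible run lengths drawn from a set of size about $(k+1)/2$ and constrained modulo a suitable period. At lengths in one residue class a factor can meet at most one separator, so only $k$ factors occur; at lengths in a complementary residue class a factor may straddle two separators, and the number of admissible (left run, right run) pairs attains the product $\bigl\lceil \frac{k+1}{2} \bigr\rceil \bigl\lfloor \frac{k+1}{2} \bigr\rfloor = \phi(k)$. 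Verifying that both residue classes are infinite is then routine and yields the two required infinite sets.

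The main obstacle, common to both directions, is extracting the \emph{exact} constant $\phi(k) = \bigl\lceil \frac{k+1}{2} \bigr\rceil \bigl\lfloor \frac{k+1}{2} \bigr\rfloor$ rather than a crude $O(k^2)$ estimate. On the upper side this demands an extremal, Turán-type optimisation over the ways in which the $k$ short factors can be simultaneously compatible with two overlapping periodic regions, the quadratic shape $\lfloor (k+1)^2/4 \rfloor$ being the maximum of a product of two block-counts whose sizes differ by at most one; on the lower side it demands choosing the run lengths so that the realised maximum is exactly $\phi(k)$ and not merely of the same order. The difficulty is compounded by the fact, visible already in the paper's examples, that $\cpx$ need not be monotone even when bounded, so one cannot replace the direct counting by a naive analysis of the differences $\cpx(n+1) - \cpx(n)$.
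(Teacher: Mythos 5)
First, a point of reference: the paper does \emph{not} prove Theorem~\ref{th:BB2005} at all --- it is quoted from \cite{BaloghB05} as context for Theorem~\ref{th:ER82}, and the only fragment of it visible in the paper is the example $\left\{ \on\ze^{2j}\on : j \in \N \right\}$, which realises the case $k=3$ of the second bullet. So your proposal must stand on its own, and as it stands it has gaps that are structural, not cosmetic. For the first bullet, the architecture cannot work: Theorems~\ref{th:ER82} and~\ref{th:ER82-v} are purely qualitative, and the finite set $T$ and the words $x,y,z$ that Theorem~\ref{th:ER82-v} provides depend on the language $L$, with no bound on their number or lengths in terms of $\cpx(m)$ and $m$. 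Hence no argument routed through that structure theorem can yield a bound valid from the specific cutoff $n + \cpx(m) + m$ onward, let alone the specific constant $\phi(\cpx(m))$; the Balogh--Bollob\'as theorem is precisely a quantitative statement and must be proved by direct counting. Two intermediate steps are also false as stated. The pigeonhole step (two equal length-$m$ windows inside a factor of length at least $k+m$) forces only a local period on the span between the two occurrences, not membership in ``finitely many eventually periodic patterns''. Worse, the claim that for $\ell \ge k + m$ ``every factor is read inside one of the periodic cores $y^{\infty}$'' is wrong exactly where it matters: the factors straddling a junction between two periodic regions --- such as $\on\ze^{\ell - 2}\on$ in the paper's example, which lies inside no power of $y = \ze\ze$ --- are the ones that push the eventual complexity above $k$, up to $\phi(k)$. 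If your claim were true, the eventual bound would be linear in $k$, contradicting the second bullet. The ``Tur\'an-type optimisation'' you name in your last paragraph is not a residual difficulty; it \emph{is} the theorem, and it is nowhere carried out.

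For the second bullet your instinct (generalise the paper's example) is correct, but the sketched construction fails: if single $\on$'s separate $\ze$-runs in the \emph{interior} of arbitrarily long words, a window containing exactly one separator has the form $\ze^{a}\on\ze^{b}$ with $a + b = n - 1$, and unless consecutive run lengths are rigidly tied together the number of realisable pairs $(a,b)$ grows with $n$, so the complexity is unbounded. The repair is to concentrate the $\on$'s at the two ends. Put $K = \left\lceil \frac{k+1}{2} \right\rceil$ and $K' = \left\lfloor \frac{k+1}{2} \right\rfloor$, and take $L_k = \left\{ \on^{K-1}\ze^{a}\on^{K'-1} : a \in S \right\}$, where $S$ is the union of the integer intervals $[jc,\, jc + k - 3]$ for $j \in \N$ and some large period $c$ (say $c = 3k$). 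If $[n-k+1,\, n-2] \cap S = \emptyset$, the $n$-length factors are the words $\on^{t}\ze^{n-t}$ with $0 \le t \le K-1$ and $\ze^{n-t'}\on^{t'}$ with $0 \le t' \le K'-1$, which number $K + K' - 1 = k$; if $[n-k+1,\, n-2] \subseteq S$ (i.e.\ $n \equiv k-1 \pmod{c}$), one gains in addition all $(K-1)(K'-1)$ straddling factors $\on^{t}\ze^{n-t-t'}\on^{t'}$ with $1 \le t \le K-1$ and $1 \le t' \le K'-1$, for a total of $k + (K-1)(K'-1) = KK' = \phi(k)$. Both classes of lengths are infinite, which settles the second bullet (small $k$ being degenerate but easy). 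The first bullet, however, remains entirely open in your proposal.
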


The second part of Theorem~\ref{th:BB2005} ensures that the function $\phi$ is optimal.

\section{The proof of Theorem~\ref{th:ER82}} \label{sec:proof}

% Recall that 
% $L$ is a language over the alphabet $A$ and $p$ is the complexity function of $L$.

\begin{definition}%[Special factor]
We say that a word $w \in A^\star$ is a \emph{(right-)special factor} of the language $L$ if there exist $a$, $b \in A$ with $a \ne b$ such that both $wa$ and $wb$ are factors of $L$. 
\end{definition}

Special factors are standard tools for studying complexity functions \cite{CantBook}.

\begin{lemma} \label{lem:infinite-spec}
The language $L$ admits infinitely many special factors 
if, and only if,
for each $n \in \N$, $L$ admits at least one $n$-length special factor.
\end{lemma}

\begin{proof}
Simply remark that any suffix of any special factor of $L$ is also a special factor of~$L$.
\end{proof}

\begin{definition}
We say that the language $L$ is \emph{(right-)extendable} if for each $w \in L$, there exists $a \in A$ such that $wa \in L$.
\end{definition}

\begin{example} \label{ex:inf-ext} 
 The language of any infinite or bi-infinite word is extendable.
\end{example}

 \begin{lemma} \label{lem:const-inc} 
\leavevmode
\begin{enumerate} 
 \item \label{lem:finite-constant} If a language only admits finitely many special factors then its complexity function is eventually constant.
 \item If an extendable language admits infinitely many special factors then its complexity function is increasing.
\end{enumerate}
\end{lemma}

\begin{proof}
For each non-empty word $w$, let $\rho(w)$ denote the ${(\lgr{w} - 1)}$-length prefix of $w$:
$\rho(xa) = x$ for every $x \in A^\star$ and every $a \in A$.
For every $w \in A^\star$, 
$w$ is a special factor of $L$ if, and only if, there exist $u$, $v \in L$ such that $u \ne v$ and $\rho(u) = \rho(v) = w$.

Assume that $L$ only admits finitely many special factors. 
Let $m$ be an upper bound on the lengths of the special factors of $L$.
For every integer $n > m$,
$\rho$ induces an injection from $F_{n + 1}(L)$ into $F_n(L)$, and thus we have $p(n + 1) \le p(n)$.
Therefore, $p$ is monotonically decreasing on $\{ n \in \N : n > m  \}$. 
It follows that $p$ is eventually constant.

Assume that $L$ is extendable and admits  infinitely many special factors.
Let $n \in \N$.
On the one hand,  
$\rho$ induces a surjection from $F_{n + 1}(L)$ onto $F_n(L)$ because $L$ is extendable.
On the other hand, 
$\rho$ is not injective on $F_{n + 1}(L)$ because, 
by Lemma~\ref{lem:infinite-spec},
$L$ admits at least one $n$-length special factor. 
% Since $L$ is extendable, $\rho$ induces a surjection from $F_{n + 1}(L)$ onto $F_n(L)$.
% Moreover, by Lemma~\ref{lem:infinite-spec}, 
% $L$ admits at least one $n$-length special factor, and thus $\rho$ is not injective on $F_{n + 1}(L)$.
Hence, $\rho$ induces a non-bijective surjection from $F_{n + 1}(L)$ onto $F_n(L)$.
It follows $p(n + 1) > p(n)$.
We have thus shown that $p$ is increasing.
 \end{proof}

Note that the converse of Lemma~\ref{lem:const-inc}.\ref{lem:finite-constant}  does not hold in general:

\begin{example}
Let $L = \{ \ze^k \on : k \in \N \}$.
The complexity function of $L$ is eventually constant: 
for every $n \in \Nast$, 
we have $p(n) = 2$ because $F_n(L) =  \{ \ze^n,  \ze^{n - 1} \on \}$.
However, $L$ admits infinitely many special factors: 
for every $n \in \N$, $\ze^n$ is a special factor of $L$. 
\end{example}

\begin{exercise} %\label{exo:demo-MH}
Prove that the complexity function of any extendable language is either increasing or \fiatc. 
\end{exercise}

\begin{exercise} \label{exo:eventually-w}
 Prove that if the language of an infinite word only admits finitely many special factors then this infinite word is eventually periodic.
\end{exercise}

\begin{exercise} \label{exo:eventually-lg}
Prove that if the language $L$ only admits finitely many special factors then there exists a finite language $X$ over $A$ such that 
$
L \subseteq \left\{ x y^n : (x, y, n) \in X \times X \times \N \right\} 
$.
\end{exercise}

\begin{exercise} \label{exo:p-s}
For each $n \in \N$, let $s(n)$ denote the number of $n$-length special factors of $L$.
\begin{enumerate}
 \item  
Prove that $p(n + 1) - p(n) \le (\alpha - 1) s(n)$ for every $n \in \N$.
\item Prove that if $L$ is extendable then $p(n + 1) - p(n) \ge s(n)$ for every $n \in \N$.
\end{enumerate}
\end{exercise}

Lemma~\ref{lem:const-inc} can be easily deduced from Exercise~\ref{exo:p-s}.
% \begin{definition}
% \end{definition}
% 
% $L$ is \emph{factorial} if for all $u$, $v$, $w \in A^\star$, $uwv \in L$ implies $w \in L$.
% If $L$ is factorial then the complexity function of $L$ maps each $n \in \N$ to the cardinality of $L \cap A^n$.

\begin{definition}
A language is called \emph{factorial} if it equals its set of factors.
\end{definition}

% The language $L$ is factorial if, and only if, it satisfies the following three equivalent conditions:
% \begin{enumerate}
%  \item $L = \bigcup_{n \in \N} F_n(L)$,
%  \item for every $n \in \N$, $F_n(L) = L \cap A^n$, and 
%  \item For all $v$, $w$, $v' \in A^\star$, $vwv' \in L$ implies $w \in L$.
% \end{enumerate}

\begin{proof}[Proof of Theorem~\ref{th:ER82}]
For every $k \in \N$ and every $w \in A^\star$, 
let $E_k(w)$ denote the set of all $x \in A^k$ such that $wx$ is a factor of $L$.
For every $k \in \N$, 
let $L_k$ denote the set of all $w \in A^\star$ such that $E_k(w) \ne \emptyset$.
Put $L_\infty = \bigcap_{k \in \N} L_k$.

\begin{claim} \label{claim:Lk-facto}
 For every $k \in \N \cup \{ \infty \}$, the language $L_k$ is factorial.
\end{claim}

\begin{proof}
Let $k \in \N$.
Let $w$ be a factor of $L_k$.
There exist $v$, $v' \in A^\star$  such that $v w v' \in L_k$.
For any $x \in E_k(v w v')$, the $k$-length prefix of $v' x$ belongs to $E_k(w)$.
Therefore, $w$ belongs to $L_k$.
We have thus shown that $L_k$ is factorial for every $k \in \N$.

Moreover, any intersection of factorial languages is also a factorial language.
Therefore, $L_\infty$ is factorial.
\end{proof}

\begin{claim}  \label{claim:Lk+1-Lk}
For every $k \in \N$, $L_{k + 1}$ is a subset of $L_k$.
\end{claim}

\begin{proof}
Let $w \in L_{k + 1}$.
For any $x \in E_{k + 1}(w)$, 
the $k$-length prefix of $x$ belongs to $E_k(w)$.
Therefore, $w$ belongs to $L_k$.
\end{proof}

\begin{claim}  \label{claim:Lk-Lk+1A}
For every $k \in \N$, $L_k \setminus \{ \mv \}$ is a subset of $L_{k + 1} A$.
\end{claim}

\begin{proof}
Let $w \in L_k \setminus \{ \mv \}$.
Write $w$ in the form $w = w' a$ with $w' \in A^\star$ and $a \in A$.
For any $x \in E_k(w)$, $ax$ belongs to  $E_{k + 1}(w')$.
It follows that $w'$ belongs to $L_{k + 1}$,
and thus $w$ belongs to $L_{k + 1} A$.
\end{proof}

\begin{claim} \label{claim:Linfty-ext}
 The language $L_\infty$ is extendable.
\end{claim}

 \begin{proof}
Let $w \in L_\infty$.
For each $k \in \N$, there exists $a_k \in A$ such that $wa_k \in L_k$: 
the first letter of any element of $E_{k + 1}(w)$ is a suitable choice for $a_k$.
It follows that there exists $a \in A$ such that $wa \in L_k$ for infinitely many $k \in \N$.
Since the sequence of sets $\left(L_k \right)_{k \in \N}$ is monotonically decreasing by Claim~\ref{claim:Lk+1-Lk}, $wa$ belongs to $L_\infty$.
\end{proof}

For each $k \in \N \cup \{ \infty \}$, let $p_k$ denote the complexity function of $L_k$.
To prove Theorem~\ref{th:ER82}, it suffices to check that the following four assertions are equivalent:
\begin{enumerate}[$(i)$.]
\item \label{item:pm-m} 
There exists $m \in \N$ such that $p(m) \le m$.
\item \label{item:pinfty-bounded}
 $L_\infty$ only admits finitely many special factors. 
\item \label{item:pj-bounded} 
There exists $j \in \N$ such that  $L_j$ only admits finitely many special factors.
% \item \label{item:pk-bounded}
% For every $k \in \N$, $p_k$ is bounded.
\item \label{item:p-bounded}
$p$ is bounded.
\end{enumerate}

\paragraph{$(\ref{item:p-bounded}) \implies (\ref{item:pm-m})$.} Assume that $p$ is bounded.
Put $m =  \sup_{n \in \N} p(n)$.
Clearly, $m$ satisfies $p(m) \le m$.
Therefore, implication $(\ref{item:p-bounded}) \implies (\ref{item:pm-m})$ holds.

\paragraph{$(\ref{item:pj-bounded}) \implies (\ref{item:p-bounded})$.}
For all  $k$, $n \in \N$,
$F_{n + 1}(L_{k})$ is a subset of $F_n(L_{k + 1}) A$ by Claim~\ref{claim:Lk-Lk+1A},
and thus inequality $p_k(n + 1) \le \alpha p_{k + 1}(n)$ holds.
Hence, for every $k \in \N$, $p_{k + 1}$ is bounded only if $p_{k}$ is bounded.
Now, assume that there exists $j \in \N$ such that  $L_j$ only admits finitely many special factors.
Then,  $p_j$ is bounded by Lemma~\ref{lem:const-inc}.\ref{lem:finite-constant}.
It follows that the $j$ complexity functions $p_{j - 1}$, $p_{j - 2}$, $p_{j - 3}$, \ldots, $p_0$ are also bounded.
Besides, $p$ equals $p_0$ because $L_0$ equals the set of all factors of $L$.
We have thus proven that $p$ is bounded.
The proof of implication $(\ref{item:pj-bounded}) \implies (\ref{item:p-bounded})$ is complete.

\paragraph{$(\ref{item:pinfty-bounded}) \implies (\ref{item:pj-bounded})$.}
Assume that $L_\infty$ only admits finitely many special factors.
Let $n \in \N$ be such that $L_\infty$ does not admit any $n$-length special factor.
On the one hand,
it follows from Claim~\ref{claim:Lk-facto} that $F_{n + 1}(L_k) = L_k \cap A^{n + 1}$ for
every $k \in \N \cup \{ \infty \}$, and thus
$$
F_{n + 1}(L_\infty) 
= 
L_\infty \cap A^{n + 1}
= 
\bigcap_{k \in \N} (L_k \cap  A^{n + 1} ) 
=  
\bigcap_{k \in \N} F_{n + 1}(L_k) \, . 
$$
On the other hand,
$\left(F_{n + 1}(L_k) \right)_{k \in \N}$ 
is a sequence of finite sets, which is monotonically decreasing by Claim~\ref{claim:Lk+1-Lk}, 
so there exists $j \in \N$ such that
$$
F_{n + 1}(L_j) = \bigcap_{k \in \N} F_{n + 1}(L_k) \, .
$$
Hence, we have $F_{n + 1}(L_j) = F_{n + 1}(L_\infty)$, 
and thus $L_j$ does not admit any $n$-length special factor.
It then follows from Lemma~\ref{lem:infinite-spec}  that $L_j$ only admits finitely many special factors.
The proof of implication $(\ref{item:pinfty-bounded}) \implies (\ref{item:pj-bounded})$ is complete.

\paragraph{$(\ref{item:pm-m}) \implies (\ref{item:pinfty-bounded})$.}
Assume that there exists $m \in \N$ such that $p(m) \le m$.
Since $L_\infty$ is a subset of $L_0$, we have $p_\infty(m) \le p_0(m) = p(m) \le m$,
so $p_\infty$ is not increasing.
It then follows from Claim~\ref{claim:Linfty-ext} and Lemma~\ref{lem:const-inc} 
that $L_\infty$ only admits finitely many special factors.
We have thus proven implication $(\ref{item:pm-m}) \implies (\ref{item:pinfty-bounded})$.
\end{proof}

\begin{exercise}
\label{exo:cpx-bounded}
Prove Theorem~\ref{th:ER82-v} (hint: use Exercise~\ref{exo:eventually-lg}).
\end{exercise}

\bibliographystyle{plain}
\bibliography{BibPansiot}

\end{document}